\theoremstyle{remark}
\newtheorem{observation}{Observation}[section]
\newtheorem{definition}{Definition}[section]
\newtheorem{problem}{Problem}[section]
\newtheorem{proposition}{Proposition}[section]
\newtheorem{lemma}{Lemma}[section]
\newcommand{\Sup}{\text{sup}}
\newcommand{\InnermostTruss}{\mathit{innermostTruss}}
  \providecommand\BibTeX{{%
    \normalfont B\kern-0.5em{\scshape i\kern-0.25em b}\kern-0.8em\TeX}}}
\begin{document}


\title{Efficient Algorithms to Mine Maximal Span-Trusses From Temporal Graphs}


\author{Quintino F. Lotito}
\affiliation{%
  \institution{University of Trento}
  \city{Trento}
  \country{Italy}}
\email{quintino.lotito@studenti.unitn.it}

\author{Alberto Montresor}
\affiliation{%
  \institution{University of Trento}
  \city{Trento}
  \country{Italy}}
\email{alberto.montresor@unitn.it}

\renewcommand{\shortauthors}{Lotito and Montresor, et al.}

\begin{abstract}
    Over the last decade, there has been an increasing interest in temporal graphs, pushed by a growing availability of temporally-annotated network data coming from social, biological and financial networks.
    
    Despite the importance of analyzing complex temporal networks, there is a huge gap between the set of definitions, algorithms and tools available to study large static graphs and the ones available for temporal graphs.
    
    An important task in temporal graph analysis is mining dense structures, i.e., identifying high-density subgraphs together with the span in which this high density is observed. 
    
    In this paper, we introduce the concept of $(k, \Delta)$-truss (span-truss) in temporal graphs, a temporal generalization of the $k$-truss, in which $k$ captures the information about the density and $\Delta$ captures the time span in which this density holds. We then propose novel and efficient algorithms to identify maximal span-trusses, namely the ones not dominated by any other span-truss neither in the order $k$ nor in the interval $\Delta$, and evaluate them on a number of public available datasets.
\end{abstract}

\begin{CCSXML}
<ccs2012>
   <concept>
       <concept_id>10003752.10003809.10003635</concept_id>
       <concept_desc>Theory of computation~Graph algorithms analysis</concept_desc>
       <concept_significance>500</concept_significance>
       </concept>
   <concept>
       <concept_id>10002950.10003624.10003633</concept_id>
       <concept_desc>Mathematics of computing~Graph theory</concept_desc>
       <concept_significance>500</concept_significance>
       </concept>
   <concept>
       <concept_id>10002951.10003227.10003236</concept_id>
       <concept_desc>Information systems~Spatial-temporal systems</concept_desc>
       <concept_significance>500</concept_significance>
       </concept>
   <concept>
       <concept_id>10002951.10003227.10003351</concept_id>
       <concept_desc>Information systems~Data mining</concept_desc>
       <concept_significance>300</concept_significance>
       </concept>
   <concept>
       <concept_id>10002951.10003260.10003261</concept_id>
       <concept_desc>Information systems~Web searching and information discovery</concept_desc>
       <concept_significance>300</concept_significance>
       </concept>
 </ccs2012>
\end{CCSXML}

\ccsdesc[500]{Theory of computation~Graph algorithms analysis}
\ccsdesc[500]{Mathematics of computing~Graph theory}
\ccsdesc[500]{Information systems~Spatial-temporal systems}
\ccsdesc[300]{Information systems~Data mining}
\ccsdesc[300]{Information systems~Web searching and information discovery}

\keywords{Community detection, Dense structures, Graph mining, Social networks analysis, Temporal graphs}

\maketitle

\section{Introduction}
Despite the fact that graph theory has been studied for centuries, in the last years there has been an explosion in the interest of the research community in network-related fields. This is mainly motivated by the increasing interest in social networks -- which can be defined as a set of social entities (such as people, groups, and organizations) together with the relationships or interactions between them -- and by a proliferating availability of network datasets coming from online social networks (e.g., Facebook, Twitter, Instagram, YouTube), biological networks (e.g., molecular interactions) or financial interactions.  

So far, most of the work in social network analysis has focused on static graphs. The growing availability of temporally-annotated network data coming from social, biological and financial networks creates the opportunity to fill the gap between the set of definitions, algorithms and tools available for large static graphs, and the ones available to analyze temporal graphs. The latter are defined as graphs that change over time (i.e., whose edges are not continuously active). However, it is not yet clear how introducing the notion of time will affect the computational complexity of combinatorial graph problems~\cite{Holme}. 

\begin{figure}
\centering
\includegraphics[width=\linewidth]{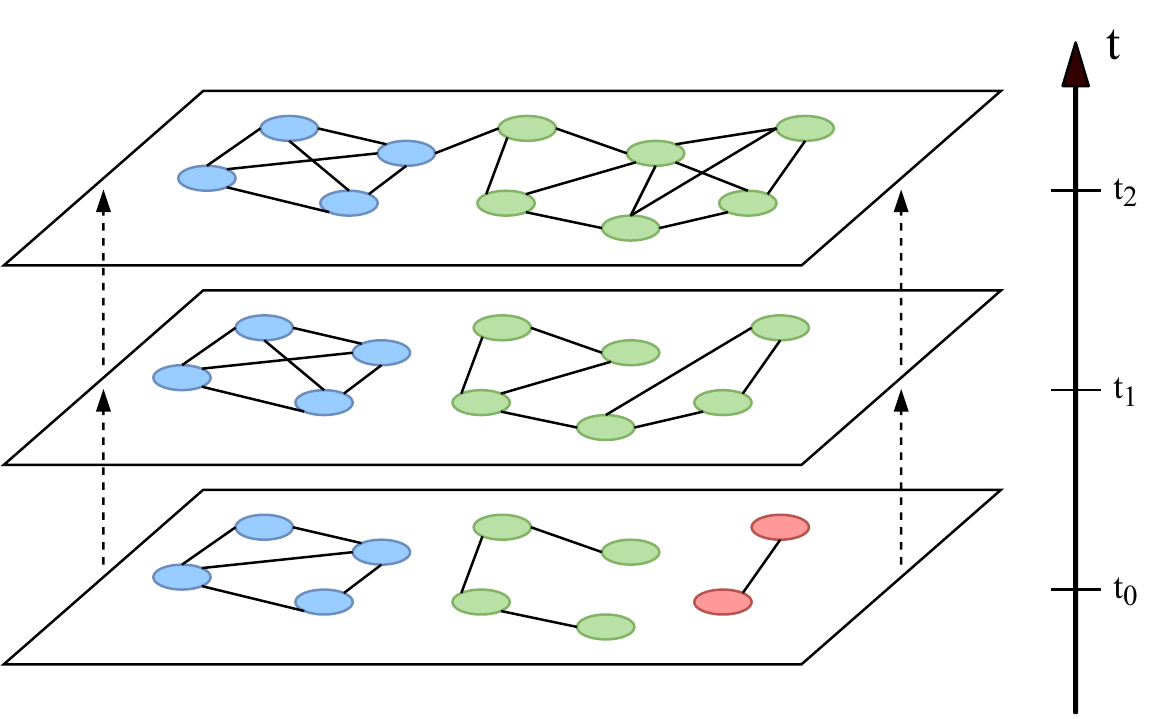}
\caption{A temporal graph with time-evolving communities. It is represented as a sequence of static graphs; each static graph is a snapshot of the temporal graph at a certain time.}
\end{figure}

Just to mention a few examples, temporal graph modelling and analysis of temporal properties can have applications in sociology and social network analysis (e.g., find voting patterns based on social media posts); security and distributed computing (e.g., design strategy to contain the spread of malware in computing devices); biology (e.g., study the set of chemical reactions that occur in a healthy organisms)~\cite{Holme}.

A property of real-world graphs is that they tend to be globally sparse but locally dense, meaning that while the entire graph is sparse (i.e., vertices have a small average degree), it contains dense subgraphs (i.e., groups of vertices with a large number of links among each other). In general, density is an indication of relevance. Dense regions in a network may indicate high degrees of interaction and mutual similarity. In real-world applications, these regions may indicate characteristics like attractive forces or favourable environments~\cite{survey}.

The enumeration of the dense components of a graph can either be the main goal of an analysis task, or act as a preprocessing step aiming to reduce the graph by removing sparse parts, in order to conduct more complex and time-consuming analysis~\cite{ChangLi}.

A number of definitions of dense structures have been proposed in literature, ranging from cliques (i.e., subgraphs in which every vertex is adjacent to every other vertex), to some relaxations of the clique, such as $k$-cores, the $k$-trusses, or the $k$-plexes.

The previously mentioned concepts of dense structures can be generalized to the temporal case, in which one can be interested in mining high-density subgraphs together with the span in which this high density is observed. Having a set of tools to extract these structures enables a detailed comprehension of the network dynamics and can act as a building block towards more complex tasks and applications~\cite{cores}. 

To name some examples of applications, we can rely on temporal dense structures computation to mine stories from social networks (i.e., events capturing popular attention in social media), which can be identified by finding a group of entities (i.e., people, locations, companies or products) strongly associated for a reasonable amount of time~\cite{social}; we can mine well-acquainted individuals from a collaboration network and form successful teams; we can analyze protein-interaction networks and locate protein complexes that are densely interacting at different states, indicating possible underlying regulatory mechanisms~\cite{bff}.

In this paper, we follow the approach of Galimberti et al.~\cite{cores}, who introduced the concept of the span-cores of a temporal graph (a temporal generalization of the $k$-core dense structure), and define the concept of $(k, \Delta)$-\textit{trusses} (\textit{span-trusses}), a temporal generalization of the $k$-truss, in which $k$ captures the information about the density and $\Delta$ captures the time span in which this density holds. We propose novel and efficient algorithms to discover the maximal span-trusses of a temporal graph, i.e., the ones not dominated by any other span-truss neither in the order $k$ nor in the interval $\Delta$. 

We conclude the paper by evaluating our contributions on a number of public available real-world network datasets, showing that our proposals consistently outperform the baseline proposed for this task.
\section{Background}
$k$-truss is a dense structure which considers the involvement between the structures of edges and triangles. It has been introduced based on the observation of social cohesion, where triangles play an essential role~\cite{Cohen}. The $k$-truss community model has three significant advantages: strong guarantee on cohesive structure, few parameters and low computational cost~\cite{truss_dyn}.

\begin{definition}[Triangle]
Given a graph $G=(V,E)$, a \textit{triangle} in $G$ is a cycle of length 3.
\end{definition}

\begin{definition}[Support of an edge]
Given a graph $G=(V,E)$ and an edge $e \in E$, the \textit{support} $\sup(e)$ is the number of triangles that $e$ participates in.
\end{definition}

\begin{definition}[$k$-truss]
Given a graph $G=(V,E)$, the $k$-\textit{truss} of $G$, where $k \geq 2$, is defined as the largest subgraph $g$ of $G$ in which every edge is contained in at least $(k-2)$ triangles within the subgraph, i.e., $\sup_g(e) \geq k-2$, $\forall e \in g$. 
\end{definition}

It is easy to see that a $k$-truss is an \emph{edge-induced subgraph}.

\begin{definition}[Maximal $k$-truss]
A $k$-truss $T_k$ of a graph $G$ is said to be \textit{maximal} if there does not exist any other $k$-truss $T_{k'}$ such that $k' > k$.
\end{definition}

\begin{problem}[Truss decomposition]
The problem of truss decomposition in a graph $G$
is to find the (non-empty) $k$-trusses of $G$ for all $k$~\cite{truss}. 
\end{problem}

\begin{figure}
\centering
\includegraphics[width=\linewidth]{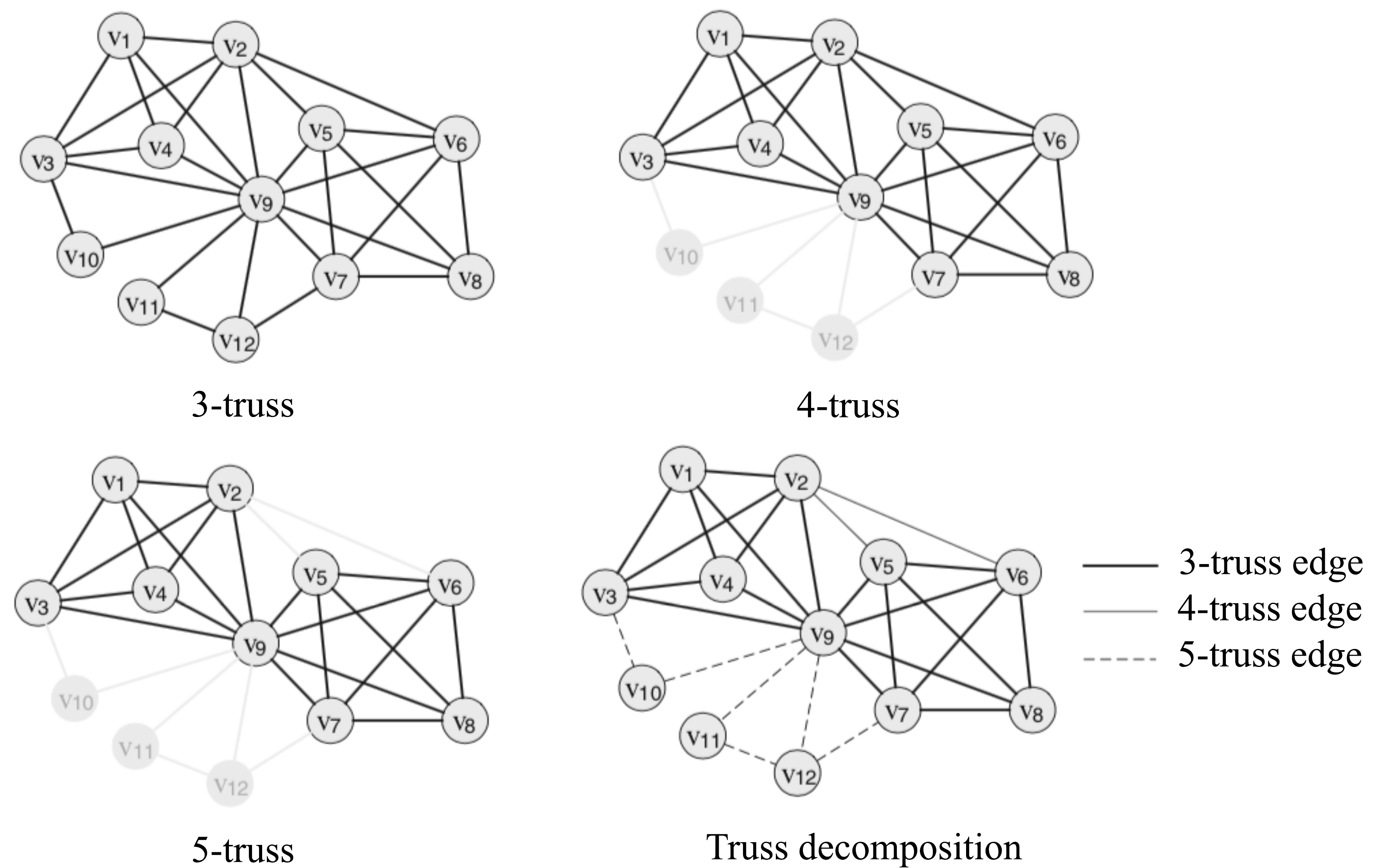}
\caption{Example of the truss decomposition of a graph~\cite{ChangLi}.}
\label{label_decomposition}
\end{figure}

\begin{observation}[Containment]
Each $k$-truss of a graph $G$ is a subgraph of the $(k-1)$-truss of $G$; for example, in~\ref{label_decomposition}, the $5$-truss is a subgraph of the $4$-truss which in turn is a subgraph of the $3$-truss.
\end{observation}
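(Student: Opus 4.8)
The plan is to derive the containment directly from the definition of the $k$-truss as the \emph{largest} edge-induced subgraph meeting a support threshold. First I would fix $G$ and let $T_k$ denote the $k$-truss of $G$; by definition every edge $e$ of $T_k$ satisfies $\sup_{T_k}(e) \ge k-2$. Since $k-2 \ge k-3 = (k-1)-2$, the very same subgraph $T_k$ also satisfies the defining edge condition of a $(k-1)$-truss: each of its edges lies in at least $(k-1)-2$ triangles of $T_k$. Hence $T_k$ is one of the candidate subgraphs over which the $(k-1)$-truss is the maximum, so $T_k$ is contained in the $(k-1)$-truss $T_{k-1}$. The displayed example with the $3$-, $4$- and $5$-trusses in Figure~\ref{label_decomposition} is then just an instance of this chain applied twice.

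The one point that deserves a sentence of justification is that ``the largest subgraph'' in the definition of $k$-truss is well defined, i.e., that a unique maximal such subgraph actually exists; the argument I would give is that the family of edge-induced subgraphs all of whose edges have support at least $t$ is closed under union, because enlarging a subgraph can only create new triangles and thus never decreases the support of an edge already present. Consequently the union of all subgraphs meeting the threshold is itself such a subgraph and is the unique maximum, so the statement ``$T_k \subseteq T_{k-1}$'' is unambiguous; one could also phrase the whole observation as: $T_k$ lies inside every maximal subgraph whose edges have support at least $k-3$.

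I do not expect any genuine obstacle here: the statement is essentially immediate once the definitions are unwound, and the only care needed is to keep straight that a larger value of $k$ corresponds to a \emph{stronger} support requirement, so that the $k$-truss automatically witnesses the weaker $(k-1)$-truss requirement and therefore must sit inside it rather than the other way around.
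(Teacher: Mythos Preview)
Your proposal is correct. The paper does not actually supply a proof for this observation; it is stated as self-evident and illustrated by the figure. Your argument is precisely the natural one, and indeed matches the reasoning the paper later gives for part (i) of Proposition~\ref{prop_containment}, where the same ``stronger support threshold implies weaker threshold, hence containment by maximality'' step appears in the temporal setting. Your extra paragraph justifying the well-definedness of ``the largest'' via closure under union is a refinement the paper omits entirely.
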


An algorithm to efficiently compute the truss decomposition of a static, unweighted, undirected graph $G=(V,E)$ has been proposed by Wang et al.~\cite{truss}. This algorithm resorts to an in-memory triangle counting algorithm~\cite{schank} and bin sort to achieve a complexity of $O(|E|^{1.5})$.
\section{Problem statement}
We are given a temporal graph $G=(V,T,\tau)$, where $V$ is a set of vertices, $T = [0,1,...,t_{max}] \subseteq \mathbb{N} $ is a discrete-time domain, and $\tau : V \times V \times T \rightarrow \{0, 1\}$ is a function defining for each pair of vertices $u,v \in V$ and each timestamp $t \in T$ whether edge $(u,v)$ exists in $t$.  

We denote $E = \{(u,v,t)\;|\;\tau(u,v,t) = 1\}$ the set of all temporal edges. Given a timestamp $t \in T$, the set of edges existing at time $t$ is $E_t = \{(u,v) \; | \;\tau(u,v,t) = 1\}$. 

A temporal interval $\Delta = [ t_s , t_e ]$ is contained into another temporal interval $\Delta'  = [t'_s,t'_e]$, denoted $\Delta  \sqsubseteq \Delta'$, if $t'_s \leq t_s$ and $t'_e \geq t_e$. 

Given an interval $\Delta \sqsubseteq T$, we denote $E_\Delta = \cap_{t \in \Delta} E_t$ the edges existing in all timestamps of $\Delta$. 
Given an interval $\Delta \sqsubseteq T$, we denote $G_\Delta = (V,E_\Delta)$ as the static graph with vertices V and edges $E_\Delta$. 

We define the temporal support of an edge $e$ over the temporal interval $\Delta$ to be equal to the support on the graph $G_\Delta$, denoted as $\Sup_\Delta(e)$.

\begin{definition}[($k,\Delta$)-truss]
The ($k,\Delta$)-truss or \textit{span-truss} of a temporal graph $G = (V, T, \tau)$ is the largest subgraph of $G_\Delta$ in which every edge is contained in at least $(k-2)$ triangles within the subgraph, i.e, $\Sup_\Delta(e) \geq k-2$, where $\Delta \sqsubseteq T$ is a temporal interval and $k \geq 2$. We will often denote the ($k,\Delta$)-truss as $T_{k, \Delta}$. 
\end{definition}

A ($k,\Delta$)-truss is a dense subgraph (where $k$ is the cohesiveness constraint), together with its temporal span, i.e., the span $\Delta$ for which the subgraph satisfies the cohesiveness constraint.

\begin{problem}[Span-truss decomposition]
Given a temporal graph $G$, find the set of all $(k, \Delta)$-trusses of $G$.
\end{problem}

\begin{observation}
For a fixed temporal interval $\Delta \sqsubseteq T$, finding all span-trusses that have $\Delta$ as their span is equivalent to computing the classic truss decomposition of the static graph $G_\Delta = (V,E_\Delta)$.
\label{lentezza}
\end{observation}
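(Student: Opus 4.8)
The plan is to unfold the two definitions and exhibit a one-to-one correspondence between the objects they produce. First I would recall that, by the very definition of temporal support, $\Sup_\Delta(e)$ is literally the support $\sup_{G_\Delta}(e)$ of $e$ in the static graph $G_\Delta=(V,E_\Delta)$. Consequently, the cohesiveness constraint $\Sup_\Delta(e)\ge k-2$ appearing in the definition of the $(k,\Delta)$-truss coincides verbatim with the constraint $\sup_g(e)\ge k-2$ appearing in the definition of the $k$-truss, once both are read inside subgraphs of $G_\Delta$.

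Next I would observe that a $(k,\Delta)$-truss is, by definition, the largest subgraph of $G_\Delta$ in which every edge satisfies that constraint, which is exactly the definition of the $k$-truss of the static graph $G_\Delta$. Hence for every $k\ge 2$ we get the identity $T_{k,\Delta}=\big(k\text{-truss of }G_\Delta\big)$ as edge-induced subgraphs; in particular the two sides are non-empty for exactly the same values of $k$, and the Containment observation transfers unchanged. From this I would conclude that the family of all span-trusses whose span is $\Delta$, namely $\{T_{k,\Delta}\;:\;k\ge 2\}$, is identical to the family $\{k\text{-truss of }G_\Delta\;:\;k\ge 2\}$ returned by classic truss decomposition, with the map $k\mapsto T_{k,\Delta}$ realizing the equivalence in both directions: running any static truss-decomposition algorithm on $G_\Delta$ outputs precisely the span-trusses with span $\Delta$, and conversely.

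Since everything reduces to chasing the definitions, there is no real difficulty; the only point worth stating carefully — the ``main obstacle,'' such as it is — is checking that fixing $\Delta$ genuinely decouples the temporal aspect, i.e., that nothing in the $(k,\Delta)$-truss definition refers to timestamps outside $\Delta$ or to the individual snapshots $G_t$ beyond their intersection $E_\Delta$. Because $G_\Delta$ already aggregates exactly that information, the reduction is exact rather than approximate, and no temporal bookkeeping survives into the static subproblem. (One may also remark that this equivalence is precisely what makes a naive ``enumerate all $\Delta$, decompose each $G_\Delta$'' strategy correct but costly, motivating the more efficient algorithms developed later.)
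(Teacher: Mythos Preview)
Your proposal is correct and, if anything, more detailed than the paper: the paper states this as a bare observation with no accompanying proof, since it follows immediately from the fact that the $(k,\Delta)$-truss is defined as the $k$-truss of the static graph $G_\Delta$. Your unfolding of the definitions and the remark that fixing $\Delta$ removes all temporal dependence is exactly the intended justification.
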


Similarly to what has been proved for the span-cores~\cite{cores}, the total number of span-trusses may be too large for human inspection. In fact, the total number of temporal intervals contained in the whole time domain $T$ is $\frac{|T|(|T|+1)}{2}$, so the total number of span-trusses is $O(|T|^2 \times k_{\text{max}} )$, where $k_{\text{max}}$ is the largest value of $k$ for which a ($k,\Delta$)-truss exists. For this reason, it is worthwhile to focus only on the most important trusses, the maximal ones, as defined next.

\begin{definition}[Maximal span-truss]
A span-truss $T_{k, \Delta}$ of a temporal graph $G$ is said to be \textit{maximal} if there does not exist any other span-truss $T_{k',\Delta'}$ of $G$ such that $k \leq k'$ and $\Delta \sqsubseteq \Delta'$.
\label{maximal_spantruss}
\end{definition}

A span-truss is recognized as maximal if it is not dominated by another span-truss both on order $k$ and the span $\Delta$.
In our temporal setting, the number of maximal span-trusses is $O(|T |^2)$, as, in the worst case, there may be one maximal span-truss for every temporal interval. However, similarly to the maximal span-cores, we expect the number of maximal span-trusses to be much smaller.

\begin{problem}[Maximal span-truss mining]
Given a temporal graph $G$, find the set of all maximal $(k, \Delta)$-trusses of G.
\label{maximal_problem}
\end{problem}

We now outline and prove some properties which will be useful later.

\begin{proposition} [Span-truss containment]
For any two span-trusses $T_{k, \Delta}$, $T_{k',\Delta'}$ of a temporal graph $G$, it holds that $k' \leq k \land \Delta' \sqsubseteq \Delta \implies T_{k, \Delta} \subseteq T_{k', \Delta'}$.
\label{prop_containment}
\end{proposition}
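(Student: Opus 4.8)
The plan is to combine two ingredients: the anti-monotonicity of $E_\Delta$ with respect to interval containment, and the fact that, by definition, a span-truss is the \emph{largest} subgraph of its underlying static graph satisfying a triangle-support condition.

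First I would establish that $G_\Delta$ is a subgraph of $G_{\Delta'}$. Since $\Delta' \sqsubseteq \Delta$, every timestamp appearing in $\Delta'$ also appears in $\Delta$, hence $E_\Delta = \bigcap_{t \in \Delta} E_t \subseteq \bigcap_{t \in \Delta'} E_t = E_{\Delta'}$, and therefore $G_\Delta = (V, E_\Delta)$ is a subgraph of $G_{\Delta'} = (V, E_{\Delta'})$. Next, I would note that $T_{k,\Delta}$, being an (edge-induced) subgraph of $G_\Delta$ and thus of $G_{\Delta'}$, is a subgraph of $G_{\Delta'}$ in which every edge is contained in at least $k-2$ triangles of $T_{k,\Delta}$ itself; because $k' \le k$, every edge of $T_{k,\Delta}$ is then contained in at least $k'-2$ such triangles. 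In other words, $T_{k,\Delta}$ belongs to the family $\mathcal{F}$ of all subgraphs $H$ of $G_{\Delta'}$ in which every edge has support at least $k'-2$ within $H$.

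Finally I would argue that $T_{k',\Delta'} = \bigcup_{H \in \mathcal{F}} H$, so that $T_{k',\Delta'}$ contains every member of $\mathcal{F}$, and in particular $T_{k,\Delta}$. This requires $\mathcal{F}$ to be closed under union: if an edge $e$ lies in some $H \in \mathcal{F}$ and participates in $k'-2$ triangles within $H$, those same triangles survive in $\bigcup_{H \in \mathcal{F}} H$, so the union again satisfies the support condition and is itself a member of $\mathcal{F}$; being the union of the whole family, it is the largest such subgraph, which by the definition of $(k',\Delta')$-truss is exactly $T_{k',\Delta'}$. Combining these observations gives $T_{k,\Delta} \subseteq T_{k',\Delta'}$.

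The only delicate point is the well-definedness of ``the largest subgraph'' invoked in the definition of a span-truss; everything else is routine monotonicity. I would dispatch it with the union-closure remark above — the same argument that makes the ordinary $k$-truss well defined — which is where essentially all of the (small amount of) work in the proof lies.
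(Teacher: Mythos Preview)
Your proof is correct and follows essentially the same idea as the paper's: show that $T_{k,\Delta}$ is a subgraph of $G_{\Delta'}$ in which every edge already meets the $(k'-2)$-support condition, and then invoke the maximality of $T_{k',\Delta'}$. The paper organizes this as two separate monotonicity steps (first vary $k$ with $\Delta$ fixed, then vary $\Delta$ with $k$ fixed) and leaves the ``largest subgraph'' step implicit, whereas you do both at once and spell out the union-closure argument that makes ``largest'' well defined; these are presentational differences, not a different route.
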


\begin{proof}
The result can be proved by separately showing that (i) $k' \leq k \implies T_{k, \Delta} \subseteq T_{k', \Delta}$, and (ii) $\Delta' \sqsubseteq \Delta \implies T_{k, \Delta} \subseteq T_{k, \Delta'}$.
\newline
(i) holds because every $e \in E_\Delta$ is in at least $k$ triangles in the subgraph $T_{k, \Delta}$, thus every $e$ is also in at least $k'$ triangles since $k' \leq k$; this means that $T_{k, \Delta} \subseteq T_{k', \Delta}$. 
\newline
(ii) holds because $\Delta' \sqsubseteq \Delta \implies E_\Delta \subseteq E_{\Delta'} \implies \forall e \in E_\Delta, e \in E_{\Delta'}$. If $e$ is in at least $k$ triangles in $T_{k, \Delta}$ then it is in at least $k$ triangles also in $T_{k, \Delta'}$, so $T_{k, \Delta} \subseteq T_{k, \Delta'}$.
\end{proof}

\begin{figure}
\centering
\includegraphics[width=\linewidth]{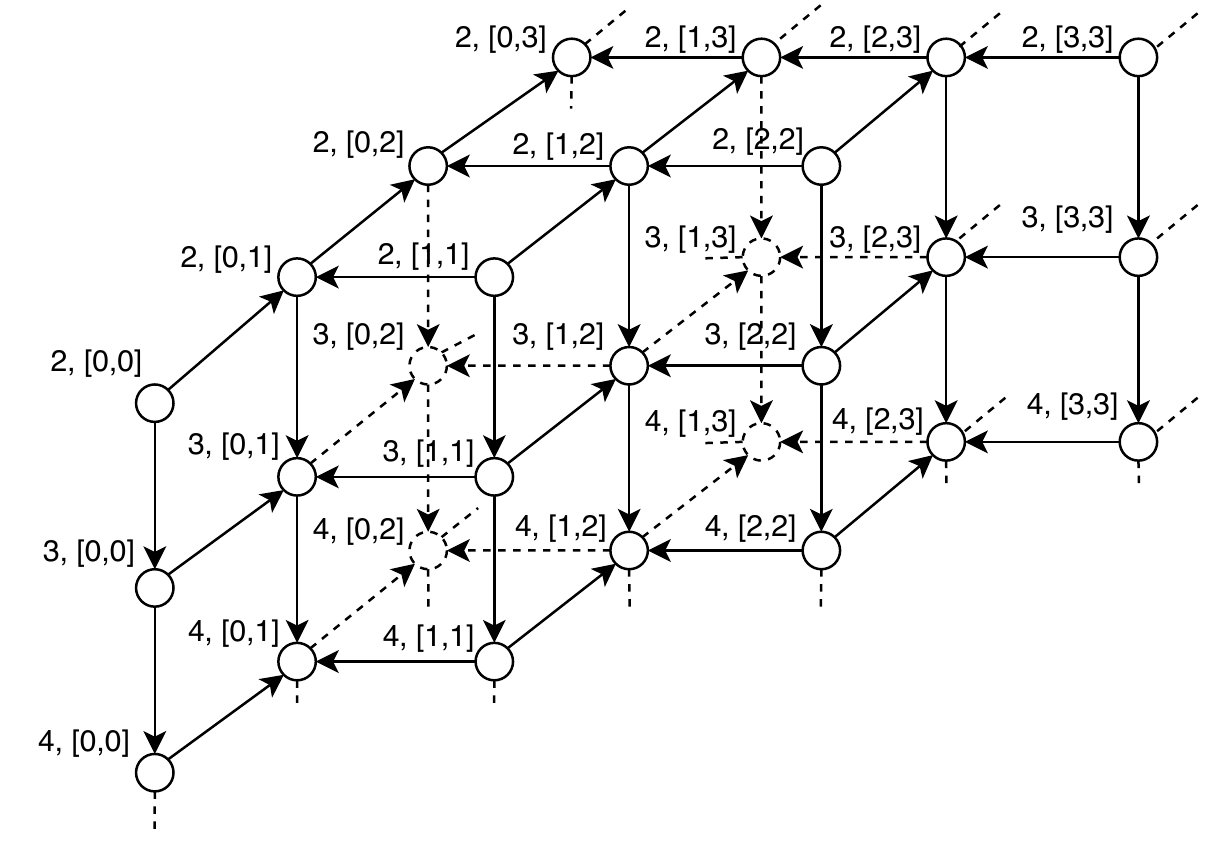}
\caption{Graphical representation of the containment property. Span-trusses follow the same structure of span-cores. For a temporal span $\Delta = [t_s , t_e]$, the $(k, \Delta)$-truss is depicted as a node labeled "$k, [t_s , t_e]$", an arrow $T_1 \rightarrow T_2$ denotes $T_1 \supseteq T_2$~\cite{cores}.}
\label{label_containment}
\end{figure}

\begin{definition}[Innermost truss]
Let $T_{k^*}[G]$ denote the innermost truss of $G$ , i.e., the non-empty $k$-truss of $G$ with the largest $k$. 
\end{definition}

\begin{lemma}
Given a temporal graph $G =(V,T,\tau)$, let $T_M$ be the set of all maximal span-trusses of $G$, and $T_{inner}$ = $\{T_{k^*}[G_{\Delta}] | \Delta \sqsubseteq T \}$  be the set of innermost trusses of all graphs $G_{\Delta}$. It holds that $T_M \subseteq T_{inner}$.
\label{lemma1}
\end{lemma}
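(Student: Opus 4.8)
The plan is to prove the inclusion elementwise, by contradiction, leaning on the equivalence stated in Observation~\ref{lentezza}: for a fixed span $\Delta$, the $(k,\Delta)$-trusses of $G$ are exactly the $k$-trusses of the static graph $G_\Delta$. Concretely, I would take an arbitrary $T_{k,\Delta}\in T_M$ and show that $k$ must be the \emph{largest} order for which $G_\Delta$ has a non-empty truss, i.e.\ that $T_{k,\Delta}=T_{k^*}[G_\Delta]$, which places it in $T_{inner}$ by definition of the innermost truss.

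First I would observe that, by the very definition of $(k,\Delta)$-truss, $T_{k,\Delta}$ \emph{is} the $k$-truss of $G_\Delta$ (both are "the largest subgraph of $G_\Delta$ in which every edge lies in at least $k-2$ triangles"), and it is non-empty since it is a genuine span-truss. Let $k^*$ be the order of the innermost truss of $G_\Delta$; then $k\le k^*$, because $T_{k,\Delta}$ witnesses a non-empty $k$-truss of $G_\Delta$ and $k^*$ is the maximum such order. For the reverse inequality, suppose $k<k^*$. By Observation~\ref{lentezza}, $T_{k^*}[G_\Delta]$ is a (non-empty) span-truss of $G$, namely $T_{k^*,\Delta}$; it differs from $T_{k,\Delta}$ since $k^*>k$, and it satisfies $k\le k^*$ together with $\Delta\sqsubseteq\Delta$ trivially. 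Hence $T_{k,\Delta}$ is dominated, contradicting Definition~\ref{maximal_spantruss}. Therefore $k=k^*$, so $T_{k,\Delta}=T_{k^*}[G_\Delta]\in T_{inner}$; since $T_{k,\Delta}$ was arbitrary, $T_M\subseteq T_{inner}$.

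The only genuinely delicate point is the identification $T_{k,\Delta}=T_k[G_\Delta]$ and the use of Observation~\ref{lentezza} to transport a static $k^*$-truss of $G_\Delta$ back into a span-truss of $G$ carrying the \emph{same} span $\Delta$; once that bookkeeping is in place the domination argument is immediate, so I do not expect a substantial obstacle. It is worth remarking that the converse inclusion fails in general: an innermost truss $T_{k^*}[G_\Delta]$ can still be dominated by some $T_{k',\Delta'}$ with $\Delta\sqsubseteq\Delta'$ and $\Delta\ne\Delta'$, so $T_{inner}$ is typically a strict superset of $T_M$. This is precisely what makes Lemma~\ref{lemma1} useful as a candidate-generation/pruning principle rather than an exact characterization.
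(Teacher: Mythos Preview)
Your argument is correct and follows the same route as the paper: assume a maximal span-truss $T_{k,\Delta}$ is not the innermost truss of $G_\Delta$, then a non-empty $T_{k',\Delta}$ with $k'>k$ exists and dominates it, contradicting maximality. The paper's proof is simply a terser version of yours, omitting the explicit appeal to Observation~\ref{lentezza} and the remark on the failure of the converse.
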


\begin{proof}
Every $T_{k,\Delta} \in T_M$ is the innermost truss of the non-temporal graph $G_{\Delta}$: else, there would exist another truss $T_{k',\Delta} \neq \emptyset$ with $k' > k$, implying that $T_{k, \Delta} \not\in T_M$.
\end{proof}

\begin{lemma}
Given a temporal graph $G =(V,T,\tau)$, and three temporal intervals $\Delta = [t_s,t_e] \sqsubseteq T$, $\Delta' = [t_s - 1,t_e] \sqsubseteq T$, and $\Delta'' = [t_s,t_e + 1] \sqsubseteq T$. The innermost truss $T_{k^*}[G_{\Delta}]$ is a maximal span-truss of $G$ if and only if $k^* > \text{max}\{k',k''\}$ where $k'$ and $k''$ are the orders of the innermost trusses of $G_{\Delta'}$ and $G_{\Delta''}$, respectively.
\label{lemma2}
\end{lemma}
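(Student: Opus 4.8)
The plan is to prove the two directions of the biconditional separately, in both cases leaning on the span-truss containment property (Proposition~\ref{prop_containment}), on the definition of innermost truss, and on the notion of domination used in Definition~\ref{maximal_spantruss}. Throughout I will identify the innermost truss $T_{k^*}[G_\Delta]$ with the span-truss $T_{k^*,\Delta}$, and I will use that, by hypothesis, $\Delta' = [t_s-1,t_e]$ and $\Delta'' = [t_s,t_e+1]$ are genuine sub-intervals of $T$ with $\Delta \sqsubseteq \Delta'$ and $\Delta \sqsubseteq \Delta''$. As already done in Lemma~\ref{lemma1}, every span-truss under consideration is tacitly assumed to be non-empty; in particular the innermost trusses $T_{k',\Delta'}$ and $T_{k'',\Delta''}$ are non-empty, and the degenerate situation in which $G_{\Delta'}$ or $G_{\Delta''}$ admits no non-empty truss only forces the corresponding $k'$ or $k''$ to be trivially smaller than $k^*$, so it can be set aside.

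For the ($\Rightarrow$) direction I would argue by contraposition: assuming $k^* \le \max\{k',k''\}$, I would exhibit a span-truss that dominates $T_{k^*,\Delta}$. Without loss of generality suppose $k^* \le k'$. Then $T_{k',\Delta'}$ is a span-truss of $G$ with $k^* \le k'$ and $\Delta \sqsubseteq \Delta'$, and it is distinct from $T_{k^*,\Delta}$ because $\Delta' \ne \Delta$; hence $T_{k^*,\Delta}$ is dominated and therefore not maximal. The case $k^* \le k''$ is identical, with $\Delta''$ and $k''$ in place of $\Delta'$ and $k'$.

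For the ($\Leftarrow$) direction I would assume $k^* > \max\{k',k''\}$ and suppose, towards a contradiction, that $T_{k^*,\Delta}$ is not maximal, so that some span-truss $T_{\tilde k,\tilde\Delta}$ with $k^* \le \tilde k$, $\Delta \sqsubseteq \tilde\Delta$ and $(\tilde k,\tilde\Delta)\ne(k^*,\Delta)$ exists. If $\tilde\Delta=\Delta$ then $\tilde k > k^*$ and $G_\Delta$ has a non-empty $\tilde k$-truss, contradicting the fact that $k^*$ is the order of the \emph{innermost} truss of $G_\Delta$. Otherwise $\tilde\Delta=[t'_s,t'_e]$ strictly contains $\Delta$, so (as these are integers) $t'_s \le t_s-1$ or $t'_e \ge t_e+1$. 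In the former case $\Delta' \sqsubseteq \tilde\Delta$, and Proposition~\ref{prop_containment}, applied with both orders set to $\tilde k$, yields $T_{\tilde k,\tilde\Delta}\subseteq T_{\tilde k,\Delta'}$; since $T_{\tilde k,\tilde\Delta}\ne\emptyset$, the truss $T_{\tilde k,\Delta'}$ is non-empty, whence $k' \ge \tilde k \ge k^*$, contradicting $k^*>k'$. The latter case is handled the same way with $\Delta''$ and $k''$. In either case we reach a contradiction, so $T_{k^*,\Delta}$ is maximal.

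The ($\Rightarrow$) direction and the sub-case $\tilde\Delta=\Delta$ are essentially unwindings of the definitions. I expect the real work to lie in the remaining sub-case of ($\Leftarrow$): one must observe that \emph{any} span $\tilde\Delta$ strictly larger than $\Delta$ necessarily extends it by at least one timestamp on the left or on the right, so that one of the two unit extensions $\Delta'$, $\Delta''$ sits between $\Delta$ and $\tilde\Delta$; the containment property then transports a non-empty $\tilde k$-truss from $G_{\tilde\Delta}$ down to $G_{\Delta'}$ or $G_{\Delta''}$. The main obstacle is getting this reduction and the surrounding interval arithmetic exactly right — in particular, verifying that the boundary configurations in which $\Delta$ abuts an endpoint of $T$ are precisely the ones excluded by the hypothesis that $\Delta'$ and $\Delta''$ are contained in $T$.
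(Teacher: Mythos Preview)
Your proof is correct and follows essentially the same approach as the paper's. The paper's version is terser: in the ($\Leftarrow$) direction it simply asserts, citing Lemma~\ref{lemma1} and Proposition~\ref{prop_containment}, that $\max\{k',k''\}$ is an upper bound on the order of any span-truss of a strict super-interval of $\Delta$, whereas you unpack this claim explicitly via the observation that every such super-interval must contain $\Delta'$ or $\Delta''$ and then apply containment --- which is exactly the content of that assertion.
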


\begin{proof}
The "$\Rightarrow$" part comes directly from the definition of maximal span-truss (Definition~\ref{maximal_spantruss}): if $k^*$ were not larger than $\text{max}\{k',k''\}$, then $T_{k^*}[G_{\Delta}]$ would be dominated by another span-truss both on the order and on the span (as both $\Delta'$ and $\Delta''$ are super intervals of $\Delta$). For the "$\Leftarrow$" part, from Lemma~\ref{lemma1} and Proposition~\ref{prop_containment} it follows that $\text{max}\{k',k''\}$ is an upper bound on the maximum order of a span-truss of a super interval of $\Delta$. Therefore, $k^* > \text{max}\{k',k''\}$ implies that there cannot exist any other span-truss that dominates $T_{k^*}[G_{\Delta}]$ both on the order and on the span.
\end{proof}
\section{Efficient computation of maximal span-trusses}

We present our solution by first giving a na\"ive approach, and then by introducing three versions (Baseline, Streaming, Heuristic) that improve over the previous version.

\subsection{A na\"ive approach}
A first approach to solve the problem could be based on  Observation~\ref{lentezza}; namely, we can repeat the truss decomposition for every possible interval and then filter out non-maximal span-trusses.

\renewcommand{\algorithmicforall}{\textbf{forall}}
\begin{algorithm}
\caption{Na\"ive maximal span-trusses}
\label{naive}
\hspace*{\algorithmicindent} \textbf{Input:} A temporal graph $G =(V,T,\tau)$. \\
\hspace*{\algorithmicindent} \textbf{Output:} The set $T_M$ of all maximal span-trusses of $G$.
\begin{algorithmic}[1]
\State $\mathit{candidates} \gets \emptyset$
\State $T_M \gets \emptyset$
\ForAll {$t_s$ in [$0$, $1$, $...$ , $t_{max}$]} 
\ForAll {$t_e$ in $[t_s$, $t_s + 1$, $...$ , $t_{max}*]~|~E[t_s,t_e] \neq \emptyset$} 
\State $\Delta \gets [t_s, t_e]$
\State $\mathit{candidates}[\Delta] \gets \mathit{computeMaxTruss}(G_{\Delta})$
\EndFor
\EndFor
\State $T_M \gets$ maximal span-trusses from $\mathit{candidates}$
\end{algorithmic}
\end{algorithm}
Algorithm~\ref{naive} is trivially sound and complete since it iterates over every possible interval $\Delta$, extracts the maximal $k$-truss from $G_{\Delta}$ and saves it as a candidate element of $T_M$.

$T_M$ is constructed by filtering out non-maximal elements from $\mathit{candidates}$ and applying Definition~\ref{maximal_spantruss}.

\subsection{Baseline Algorithm}
As a baseline, we use a slightly better algorithm. This approach is similar to the baseline of the algorithm to mine span-cores~\cite{cores}. It exploits the containment properties we have proved before, which are shared between span-cores and span-trusses.

\renewcommand{\algorithmicforall}{\textbf{forall}}
\begin{algorithm}
\caption{Maximal span-trusses}
\label{baseline}
\hspace*{\algorithmicindent} \textbf{Input:} A temporal graph $G =(V,T,\tau)$. \\
\hspace*{\algorithmicindent} \textbf{Output:} The set $T_M$ of all maximal span-trusses of $G$.
\begin{algorithmic}[1]
\State $T_M \gets \emptyset$
\State $K'[t] \gets 0, \forall t \in T$
\ForAll {$t_s$ in $[0, 1, \ldots, t_{max}]$} 
\State $t^* \gets \max \{t_e \in [t_s,t_{max}]$ $|$ $E[t_s, t_e] \neq \emptyset\} $
\State $k'' \gets 0$
\ForAll {$t_e$ in $[t^*$, $t^* - 1$, $...$ , $t_s]$} 
\State $\Delta \gets [t_s, t_e]$
\State $\text{lb} \gets \max\{K'[t_e], k''\}$
\State $\InnermostTruss \gets \mathit{computeMaxTruss}(G_{\Delta})$
\State $k^* \gets$ order of $\InnermostTruss$
\If {$k^* > \text{lb}$}
\State $T_M \gets T_M \cup \{T\}$
\EndIf
\State $k'' \gets k^*$
\State $K'[t_e] \gets  \max \{K'[t_e], k''\}$
\EndFor
\EndFor
\end{algorithmic}
\end{algorithm}
Algorithm~\ref{baseline} works as follows.
It iterates over all the starting timestamps $t_s \in T$ in increasing order and, for each $t_s$, all the maximal span-trusses that have span starting in $t_s$ are identified. Proceeding in this way guarantees that a span-truss recognized as maximal will not be later dominated by another span-truss, since an interval $[t_s, t_e]$ can not be contained in another interval $[t_s',t_e']$ with $t_s < t_s'$.

To find all the maximal span-trusses having span starting in $t_s$, for any $t_s$ the algorithm identifies $t^{*} \geq t_s$, the maximum timestamp such that the edge set $E_{[t_s,t_e]}$ is not empty. Then, proceeding in decreasing order of $t_e$ and starting from $t_e = t^{*}$, all intervals $\Delta = [t_s, t_e]$ are considered (from the largest interval to the smallest interval).

The internal cycle computes the lower bound $\text{lb}$ (maximum between $K'[t_e]$ and $k''$) on the order of the innermost truss of $G_{\Delta}$ to be recognized as maximal. $K'$ is a map that maintains, for every timestamp $t \in [t_s,t^{*}]$, the order of the innermost truss of graph $G_\Delta'$ where $\Delta = [t_s-1, t]$ (i.e., $K'[t]$ stores what in Lemma~\ref{lemma2} is denoted as $k'$). $k''$ stores the order of the innermost truss of $G_\Delta''$ and $\Delta'' = [t_s , t_e + 1]$.

The selected truss is added to the set of the maximal span-trusses only if its order is larger than $\text{lb}$, then the values of $k''$ and $K'[t_e]$ are updated.



\begin{observation}
The worst-case time complexity of Algorithm~\ref{baseline} is $O(|T|^2 \times |E|^{1.5})$ since the $k$-truss decomposition (complexity $O(|E|^{1.5})$) is repeated for every $\Delta$. It is trivial to show that the number of possible intervals $\Delta$ is $O(|T|^2)$. Note that, since the output itself is potentially quadratic in $|T|$, it is not possible to improve over the $|T|^2$ factor in the computational complexity. 
\end{observation}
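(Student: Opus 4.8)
The plan is to split the claim into an $O(|T|^2 \times |E|^{1.5})$ upper bound on the running time of Algorithm~\ref{baseline} and a matching $\Omega(|T|^2)$ lower bound that justifies the closing remark that the $|T|^2$ factor cannot be removed.

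For the upper bound I would first count loop iterations. The outer loop ranges over the $|T|$ starting timestamps $t_s$, and for each fixed $t_s$ the inner loop ranges over $t_e \in [t_s, t^{*}]$, i.e.\ over at most $|T|$ values; hence the body of the inner loop executes $O(|T|^2)$ times in total, which matches the exact count $|T|(|T|+1)/2$ of intervals $\Delta \sqsubseteq T$. Next I would bound the cost of one execution of the inner body. The arithmetic on $K'$ and $k''$, the comparison with $\text{lb}$, and the conditional insertion into $T_M$ are all $O(1)$; materializing $E_\Delta = E_{[t_s,t_e]}$ can be done incrementally, since as $t_e$ decreases from $t^{*}$ to $t_s$ the set $E_{[t_s,t_e]}$ only grows, for a total of $O(|E|)$ work per value of $t_s$, which is dominated. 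The remaining cost is the call $\mathit{computeMaxTruss}(G_\Delta)$, which by the truss-decomposition algorithm of Wang et al.~\cite{truss} runs in $O(|E_\Delta|^{1.5})$; since $E_\Delta \subseteq E$ for every $\Delta$, this is $O(|E|^{1.5})$. Multiplying the two bounds yields $O(|T|^2 \times |E|^{1.5})$.

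For the lower bound I would exhibit a family of temporal graphs on which the output set $T_M$ has size $\Theta(|T|^2)$, so that merely writing $T_M$ forces $\Omega(|T|^2)$ work for any correct algorithm. By Lemma~\ref{lemma2} it suffices to build $G$ so that for every interval $\Delta = [t_s,t_e]$ the innermost-truss order $k^{*}$ of $G_\Delta$ is strictly larger than the innermost-truss orders of both $G_{[t_s-1,t_e]}$ and $G_{[t_s,t_e+1]}$; then every one of the $\Theta(|T|^2)$ innermost trusses $T_{k^{*}}[G_\Delta]$ is a distinct maximal span-truss. Concretely I would take a structure in which each additional timestamp included in an interval strictly lowers the support of every edge — e.g.\ superimpose, on a common dense gadget, a collection of triangles whose third edge is present only in a shrinking suffix of timestamps — so that shrinking $\Delta$ in either direction strictly increases every edge's support and hence the innermost-truss order, while all the relevant $G_\Delta$ stay nonempty. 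This gives $\binom{|T|+1}{2} = \Theta(|T|^2)$ maximal span-trusses, establishing the $\Omega(|T|^2)$ bound.

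I expect the lower-bound construction to be the main obstacle: one must design a temporal graph in which the innermost-truss order is \emph{strictly} monotone under both interval-shrinking directions simultaneously, and verify that this strictness does not collapse at the boundary intervals and that the gadgets can be realized with polynomially many vertices and edges. The upper-bound half is essentially bookkeeping once the $O(|E|^{1.5})$ truss decomposition of \cite{truss} is invoked as a black box.
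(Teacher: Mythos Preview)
Your upper-bound argument is exactly the paper's reasoning, just spelled out in more detail: the paper's entire justification is the two clauses inside the Observation itself (count $O(|T|^2)$ intervals, multiply by the $O(|E|^{1.5})$ cost of one truss decomposition), and you reproduce this with the additional bookkeeping about $K'$, $k''$, and the incremental construction of $E_\Delta$.

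For the lower bound you go further than the paper, which simply asserts that ``the output itself is potentially quadratic in $|T|$'' without exhibiting any instance. Your plan to build a temporal graph with $\Theta(|T|^2)$ maximal span-trusses via Lemma~\ref{lemma2} is sound, but the concrete gadget you hint at (``triangles whose third edge is present only in a shrinking suffix of timestamps'') only forces strict monotonicity when the interval is extended to the right, not to the left; you correctly flag this as the main obstacle. A cleaner realization is to place, for every interval $[s,e]\sqsubseteq T$, a disjoint clique of size $c+|T|-1-(e-s)$ whose edges live exactly on $[s,e]$: then $G_{[t_s,t_e]}$ contains precisely the cliques for super-intervals of $[t_s,t_e]$, the largest of which is the one for $[t_s,t_e]$ itself, so the innermost-truss order is $c+|T|-1-(t_e-t_s)$, strictly decreasing under either one-step extension, and Lemma~\ref{lemma2} yields $\binom{|T|+1}{2}$ distinct maximal span-trusses. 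This uses polynomially many vertices and edges, resolving the boundary and size concerns you raise.
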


We outline now and discuss the operation of building the graph $(V, E_{\Delta})$ efficiently on both space and time; we follow the approach of~\cite{cores}.

Having a fixed timestamp $t_s \in [0,...,t_{\text{max}}]$, they propose the following reasoning which holds for every $t_s$. Let $E^-(t_e)=E_{[t_s,t_e]} \setminus E_{[t_s,t_e+1]}$ be the set of edges that are in $E_{[t_s,t_e]}$ but not in $E_{[t_s,t_e+1]}$, for $t_e \in [t_s,...,t^*-1]$. For each $t_s$, one can compute and store all edge sets $\{E^-(t_e)\}_{t_e \in [t_s, t^*-1]}$. These operations can be done in $O(|T|\times|E|)$ time, because every $E^-(t_e)$ can be computed incrementally from $E_{[t_e, t_e]}$ as $E^-(t_e) = \{(u,v) \in E_{[t_s, t_e]} | \tau(u,v,t_e+1)=0\}$.

For any $t_e$, $E_{[t_s, t_e]}$ can be reconstructed as $E_{[t_s, t_e+1]} \cup E^-(t_e)$, having previously computed $E_{[t_s, t_e+1]}$. Note that storing all $E^-(t_e)$ takes $O(|E|)$ space. That is why all $E^-(t_e)$ are stored and $E_{[t_s, t_e]}$ are reconstructed afterward instead of storing the latter, which would take $O(|T|\times|E|)$ space.

We use this approach in Algorithm~\ref{baseline}.

\begin{observation}
Since for any $t_e$, we reconstruct $E_{[t_s, t_e]}$ as $E_{[t_s, t_e+1]} \cup E^-(t_e)$, we are always adding new edges to the graph $G_{[t_s, t_e+1]}$ starting from an empty graph. This means we can exploit a streaming approach to solve the problem.
\label{streaming}
\end{observation}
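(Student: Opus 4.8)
The plan is to prove the statement by establishing that, within a single iteration of the outer loop of Algorithm~\ref{baseline} (i.e.\ for a fixed starting timestamp $t_s$), the sequence of static graphs inspected by the inner loop — $G_{[t_s,t^*]}, G_{[t_s,t^*-1]}, \dots, G_{[t_s,t_s]}$ — is a chain under edge \emph{insertion} only: passing from one graph to the next never removes an edge. Given this, the streaming claim is immediate, since maintaining a truss decomposition along a sequence of pure edge insertions that starts from the empty graph is exactly the insertion-only (streaming) dynamic setting, which is cheaper and structurally simpler than the fully dynamic one, so a dedicated Streaming variant of the algorithm becomes possible.

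First I would record the monotonicity of edge sets with respect to interval containment: because $E_\Delta=\bigcap_{t\in\Delta}E_t$, any two intervals with $\Delta_1\sqsubseteq\Delta_2$ satisfy $E_{\Delta_2}\subseteq E_{\Delta_1}$ — the very fact already used in part (ii) of the proof of Proposition~\ref{prop_containment}. Instantiating it with $\Delta_1=[t_s,t_e]$ and $\Delta_2=[t_s,t_e+1]$ gives $E_{[t_s,t_e+1]}\subseteq E_{[t_s,t_e]}$, so the edge set grows (weakly) as $t_e$ decreases. Next I would turn the reconstruction identity into an exact statement: by definition $E^-(t_e)=E_{[t_s,t_e]}\setminus E_{[t_s,t_e+1]}$, and together with the inclusion just noted this yields $E_{[t_s,t_e]}=E_{[t_s,t_e+1]}\cup E^-(t_e)$ with $E_{[t_s,t_e+1]}\cap E^-(t_e)=\emptyset$; hence the step from $t_e+1$ to $t_e$ consists precisely of inserting the batch $E^-(t_e)$ and deleting nothing.

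Then I would dispatch the base case. By the choice of $t^*$ as the largest $t_e$ with $E_{[t_s,t_e]}\neq\emptyset$, we have $E_{[t_s,t^*+1]}=\emptyset$, so $E_{[t_s,t^*]}=E^-(t^*)$: the first graph of the chain is itself obtained by inserting edges into the empty graph. Composing this with the per-step claim, the inner loop visits the graphs whose edge sets are $\emptyset\subseteq E^-(t^*)\subseteq E^-(t^*)\cup E^-(t^*-1)\subseteq\cdots\subseteq E_{[t_s,t_s]}$, realised by the successive insertions of $E^-(t^*),E^-(t^*-1),\dots,E^-(t_s)$. Therefore the entire work done for a fixed $t_s$ can be driven by an incremental truss-maintenance routine that supports only edge addition, which is what the observation asserts.

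I do not expect a real obstacle: this is a bookkeeping argument layered on top of the interval-containment monotonicity. The only point deserving care is that the insertion-only chain lives inside one outer iteration — when $t_s$ advances, $G_{[t_s+1,\cdot]}$ is not reachable from $G_{[t_s,\cdot]}$ by insertions in the order the loops traverse intervals, so the maintained structure must be reset for each new $t_s$. The proof should therefore phrase the claim per outer iteration; the reset changes nothing in the complexity analysis, since the $|T|$ restarts are already subsumed in the $O(|T|^2)$ interval count.
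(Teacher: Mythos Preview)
Your proposal is correct and matches the paper's reasoning; in fact the paper gives no separate proof for this observation at all, treating it as immediate from the preceding paragraph that defines $E^-(t_e)$ and the reconstruction rule $E_{[t_s,t_e]}=E_{[t_s,t_e+1]}\cup E^-(t_e)$. Your write-up simply makes explicit the monotonicity, the empty base case at $t_e=t^*$, and the per-$t_s$ scope of the insertion-only chain, all of which the paper leaves implicit.
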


\subsection{A Streaming Algorithm}


It is trivial to see that the Algorithm~\ref{baseline} repeats the truss decomposition in every possible interval. This means it also repeats the support computation, which for a single interval $\Delta$ has complexity $O(|E_{\Delta}|^{1.5})$ and it is the most expensive operation. Here we outline an algorithm to achieve better performance with regards to the support computation.

We can reframe the problem and think of it as a streaming problem, as stated in Observation~\ref{streaming}. Suppose we have computed the support for every edge active in the interval $\Delta^* = [t_s, t_e + 1]$. In the next step, we consider the interval $\Delta = [t_s, t_e]$ and so we are considering the graph $G_{\Delta}$ which is simply the graph $G_{\Delta^*}$ with a number of edges added, namely $E^-(t_e)$. We can study how the addition of these new edges changes the support of the edges of the old graph $G_{\Delta^*}$ and develop an algorithm that computes only the support of the edges in $E^-(t_e)$ and just \textit{updates} the support of the edges in $G_{\Delta^*}$. The updating part, without always recomputing, leads to a high speedup in the performance, as we will see in the next section.

After the update of the support of the edges, we can run the truss decomposition algorithm.

\renewcommand{\algorithmicforall}{\textbf{forall}}
\begin{algorithm}
\caption{Computing the support of every edge in $G_{\Delta}$ efficiently}\label{maximal}
\hspace*{\algorithmicindent} \textbf{Input:} A graph $G_{[t_s, t_e+1]} =(V,E_{[t_s, t_e+1]})$ with the support computed for every edge and a set $E^-(t_e)$ of edges to add to $G_{[t_s, t_e+1]}$ \\
\hspace*{\algorithmicindent} \textbf{Output:} A graph $G_{[t_s, t_e]}=(V,E_{[t_s, t_e+1]} \cup E^-(t_e))$ with the supports updated
\begin{algorithmic}[1]
\ForAll {$e \in E^-(t_e)$} 
\State add $e$ to $G_{[t_s, t_e+1]}$
\State let $(u, v) = e$
\ForAll {$w \in (\text{neighbours}(u) \cap \text{neighbours}(v))$} 
\State $\Sup(u,v)= \Sup(u,v)+1$
\State $\Sup(v,w)= \Sup(v,w)+1$
\State $\Sup(u,w)= \Sup(u,w)+1$
\EndFor
\EndFor
\end{algorithmic}
\end{algorithm}

\begin{observation}
If we use a map $M$, which maps a pair of vertices $(u,v)$ to $1$ if the edge exists in $G_{\Delta}$ at observation time or to $0$ if it does not exists, we can implement the intersection at step 4 by simply iterating over the neighbours of one of the two vertices and check in $O(1)$ if the remaining edge to form the triangle exists in the graph at observation time. Hence, the running time of this approach is bounded by $\sum_{(u,v) \in  E^-(t_e)} \text{min}\{\text{deg}(u),\text{deg}(v)\}$.
\end{observation}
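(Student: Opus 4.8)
The statement to prove is the final Observation: the running time of Algorithm~\ref{maximal} is bounded by $\sum_{(u,v) \in E^-(t_e)} \min\{\deg(u),\deg(v)\}$, using the map $M$ to implement the neighbourhood intersection in $O(1)$ per candidate.

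\medskip

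The plan is to analyze the cost of Algorithm~\ref{maximal} line by line under the data-structure assumption that adjacency is stored so that, for any vertex $u$, the list $\text{neighbours}(u)$ can be enumerated in time proportional to $\deg(u)$, and that the map $M$ answers queries of the form ``does edge $(x,y)$ exist in the current graph?'' in $O(1)$ time (e.g., a hash map keyed on vertex pairs). First I would observe that the algorithm consists of a single outer loop over the edges $e=(u,v) \in E^-(t_e)$, so the total running time is the sum over those edges of the cost of one iteration; it therefore suffices to bound the per-edge cost by $\min\{\deg(u),\deg(v)\}$ and sum. Within one iteration, inserting $e$ into the graph and unpacking $(u,v)=e$ are $O(1)$ operations (an insertion into $M$ and into the adjacency structure), so the dominant term is the inner loop over $w \in \text{neighbours}(u) \cap \text{neighbours}(v)$.

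\medskip

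The key step is to replace the set-intersection in line~4 by the implementation described in the preceding Observation: rather than materializing $\text{neighbours}(u) \cap \text{neighbours}(v)$, we iterate over the neighbours of whichever of $u,v$ has the smaller degree — say $\deg(u) \le \deg(v)$, the symmetric case being identical — and for each candidate $w$ in that list we test in $O(1)$ via $M$ whether $(v,w)$ is an edge of $G_{\Delta}$; we execute the three support increments only when the test succeeds. This performs exactly $\deg(u) = \min\{\deg(u),\deg(v)\}$ constant-time probes, plus at most $\min\{\deg(u),\deg(v)\}$ triples of $O(1)$ support updates, so the inner loop costs $O(\min\{\deg(u),\deg(v)\})$. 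Adding the $O(1)$ overhead of the rest of the iteration and summing over all $e \in E^-(t_e)$ gives the claimed bound $\sum_{(u,v)\in E^-(t_e)} \min\{\deg(u),\deg(v)\}$, where the degrees are taken in the graph at the moment $e$ is processed.

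\medskip

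One subtlety I would flag, and the main thing that needs care rather than the main difficulty, is \emph{which} graph the degrees refer to: the degrees grow as edges from $E^-(t_e)$ are inserted one by one, so the bound is most cleanly stated with $\deg(u),\deg(v)$ denoting the degrees in $G_{[t_s,t_e]}$ (the final graph after all insertions), which dominates every intermediate state; this only weakens the bound and keeps it a valid upper bound. A second point to make explicit is that the claim is about the update step in isolation — the bound does not include the subsequent truss-decomposition run mentioned after the algorithm — and that correctness of the \emph{values} produced (each triangle through $e$ being counted once, contributing $+1$ to each of its three edges) is what justifies that these $O(1)$ updates suffice; this can be cited from the incremental-triangle-counting argument already invoked for~\cite{schank,truss}, so no new combinatorial work is required. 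The argument is thus essentially a bookkeeping exercise, and the only "obstacle" is stating the degree convention precisely.
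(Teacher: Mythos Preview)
Your proposal is correct and follows exactly the reasoning the paper intends: the paper states this Observation without a separate proof, and your line-by-line analysis (iterate over the smaller neighbourhood, probe $M$ in $O(1)$ per candidate, sum over $E^-(t_e)$) is precisely the argument the Observation sketches in its own wording. Your added remark about the degree convention (taking degrees in $G_{[t_s,t_e]}$ to dominate all intermediate states) is a useful clarification that the paper leaves implicit.
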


\begin{figure}
\centering
\includegraphics[width=\linewidth]{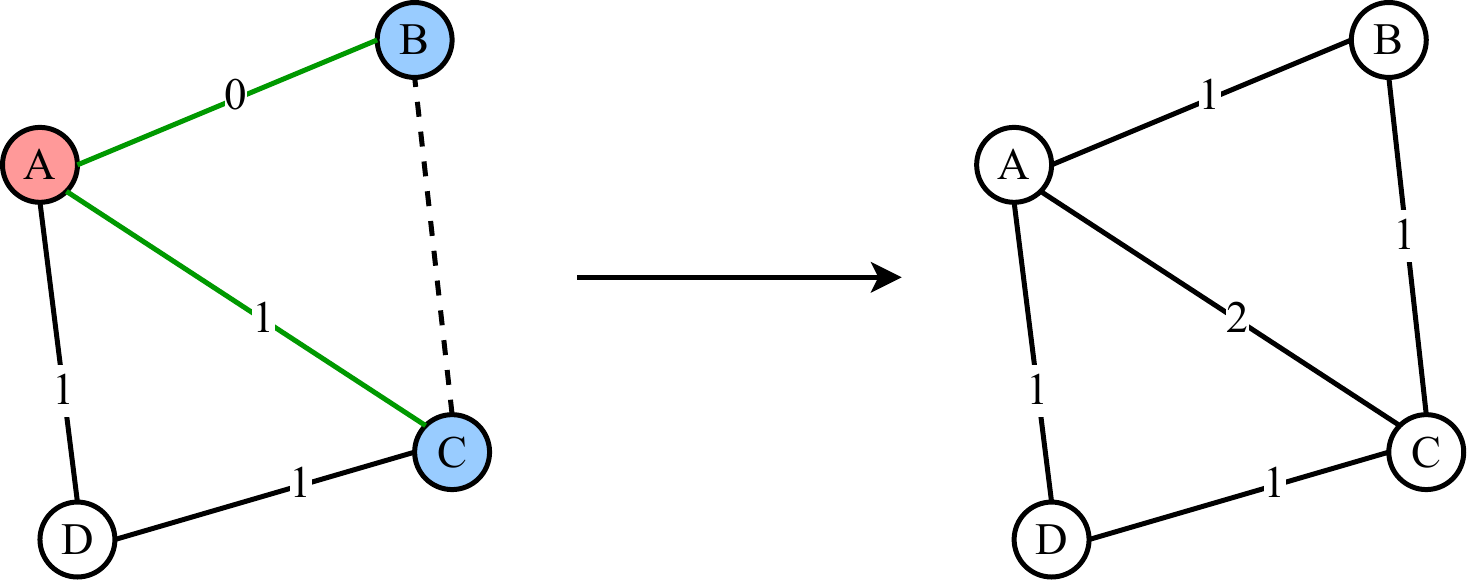}
\caption{In this example we show how the insertion of a new edge $(B,C)$ affects the supports of the other edges in the graph. The red vertex $A$ is the only vertex in $(\text{neighbours}(B) \cap \text{neighbours}(C))$, so we update the supports of $(A,C), (A,B)$ and of the new edge $(B,C)$. In fact we observe that $(B,C)$ forms a triangle with these edges, colored in green. On the right, we have the graph with the supports updated.}
\end{figure}

\subsection{Applying heuristics}
\label{heuristic}

It is worth mentioning that we still compute the truss decomposition in every graph $G_{\Delta}$. From Algorithm~\ref{baseline}, lines 11 to 14, we observe that a $k$-truss recognized as a maximal $k$-truss in a snapshot of a temporal graph will not always be recognized as a maximal span-truss.

\begin{observation}
If the order of the innermost-truss $I'$ of the graph $G_{[t_s, t_e]}$ is $k'$ and the order of the innermost-truss $I''$ of the graph $G_{[t_s, t_{e-1}]}$ is $k'$ then $I''$ is not a maximal span-truss.
\end{observation}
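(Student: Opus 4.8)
The plan is to exhibit a span-truss that dominates $I''$ both on the order and on the span, and then to invoke Definition~\ref{maximal_spantruss}. The natural witness is $I'$ itself. First I would note that $[t_s,t_{e-1}]\sqsubseteq[t_s,t_e]$: the two intervals share their left endpoint and $t_e\geq t_{e-1}$, so this is immediate from the definition of $\sqsubseteq$, and hence $[t_s,t_e]$ is a super-interval of $[t_s,t_{e-1}]$. Next I would recall that, by the definition of innermost truss, $I'$ is exactly the span-truss $T_{k',[t_s,t_e]}$ and $I''$ is exactly the span-truss $T_{k',[t_s,t_{e-1}]}$, and that both are non-empty.

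With this in hand the argument is short. Comparing $I'=T_{k',[t_s,t_e]}$ with $I''=T_{k',[t_s,t_{e-1}]}$, we have $k'\leq k'$ trivially and $[t_s,t_{e-1}]\sqsubseteq[t_s,t_e]$; moreover $I'$ is attached to a strictly larger temporal span than $I''$, so it is a different span-truss of $G$. By Definition~\ref{maximal_spantruss}, $I''$ is therefore dominated and is not maximal. I would also point out that this is nothing more than the contrapositive of the forward direction of Lemma~\ref{lemma2}: instantiating the interval $\Delta$ of that lemma to $[t_s,t_{e-1}]$ makes its $\Delta''$ equal to $[t_s,t_e]$, so the order of the innermost truss of $G_{\Delta''}$ and the order of the innermost truss of $G_{\Delta}$ coincide (both equal $k'$), and the strict inequality that Lemma~\ref{lemma2} demands for maximality fails.

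The only delicate point, and thus the only real obstacle, is the requirement in Definition~\ref{maximal_spantruss} that the dominating span-truss be a \emph{different} one: by Proposition~\ref{prop_containment} we actually have $I'\subseteq I''$ (since $E_{[t_s,t_e]}\subseteq E_{[t_s,t_{e-1}]}$), and as edge-induced subgraphs the two could even coincide, so one must justify that $I'$ still counts as a distinct span-truss because it is witnessed by a strictly larger span. Phrasing the conclusion via Lemma~\ref{lemma2}, whose statement is an equivalence with no mention of distinctness, sidesteps this entirely; the remaining checks --- that $[t_s,t_e]\sqsubseteq T$ and that both innermost trusses exist and are non-empty --- follow at once from the hypothesis that $I'$ and $I''$ have a well-defined order $k'$.
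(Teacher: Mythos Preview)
Your argument is correct and matches the paper's intended reasoning: the paper states this observation without proof, treating it as an immediate consequence of Definition~\ref{maximal_spantruss} (equivalently, of the forward direction of Lemma~\ref{lemma2}), and your write-up simply spells out that derivation. Your remark that the ``distinctness'' issue is handled because a span-truss is identified by its pair $(k,\Delta)$, so $T_{k',[t_s,t_e]}$ and $T_{k',[t_s,t_{e-1}]}$ are different span-trusses even when their edge sets coincide, is exactly the right way to close the only potential gap.
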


\begin{observation}
If the order of the innermost-truss $I'$ of the graph $G_{[t_s, t_e]}$ is $k'$ and the graph $G_{[t_s, t_{e-1}]}$ and the graph $G_{[t_s, t_e]}$ have the same number of edges with support greater than $k'-2$ then the order of $I''$ is $k'$.
\end{observation}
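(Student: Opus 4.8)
The plan is to bound the order of $I''$ from both sides and combine. The starting point is that $[t_s,t_{e-1}] \sqsubseteq [t_s,t_e]$, so $E_{[t_s,t_e]} \subseteq E_{[t_s,t_{e-1}]}$ and $G_{[t_s,t_{e-1}]}$ is a supergraph of $G_{[t_s,t_e]}$ (decreasing the right endpoint only adds edges, as in Observation~\ref{streaming}). The one structural fact I would record first is the monotonicity of trusses under edge insertion: if $H \subseteq H'$ are graphs then $T_k[H] \subseteq T_k[H']$, since $T_k[H]$, regarded as the subgraph of $H'$ edge-induced by the edge set of $T_k[H]$, is still a subgraph in which every edge lies in at least $k-2$ triangles (the witnessing triangles use only edges of $T_k[H]$, hence only edges of $H'$), so it is contained in the largest such subgraph of $H'$, namely $T_k[H']$. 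Taking $H = G_{[t_s,t_e]}$ and $H' = G_{[t_s,t_{e-1}]}$, the non-empty $k'$-truss $I'$ is contained in $T_{k'}[G_{[t_s,t_{e-1}]}]$, so the latter is non-empty and the order of $I''$ is at least $k'$.

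For the matching upper bound I would use the counting hypothesis. Write $A$ for the set of edges $e$ of $G_{[t_s,t_{e-1}]}$ with $\Sup_{[t_s,t_{e-1}]}(e) > k'-2$ and $B$ for the set of edges $e$ of $G_{[t_s,t_e]}$ with $\Sup_{[t_s,t_e]}(e) > k'-2$. First, $B \subseteq A$: an edge $e \in B$ lies in $E_{[t_s,t_e]} \subseteq E_{[t_s,t_{e-1}]}$, and every triangle through $e$ in $G_{[t_s,t_e]}$ is also a triangle through $e$ in the supergraph $G_{[t_s,t_{e-1}]}$, whence $\Sup_{[t_s,t_{e-1}]}(e) \geq \Sup_{[t_s,t_e]}(e) > k'-2$. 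Since the hypothesis says exactly $|A| = |B|$, we get $A = B$. Now assume for contradiction that $G_{[t_s,t_{e-1}]}$ has a non-empty $(k'+1)$-truss $T$. Every edge of $T$ lies in at least $(k'+1)-2 = k'-1 > k'-2$ triangles within $T$, hence also within $G_{[t_s,t_{e-1}]}$, so the edge set of $T$ is contained in $A = B \subseteq E_{[t_s,t_e]}$. But then $T$, viewed as an edge-induced subgraph of $G_{[t_s,t_e]}$, is still a subgraph in which every edge lies in at least $(k'+1)-2$ triangles (once more the witnessing triangles use only edges of $T$), so $T_{k'+1}[G_{[t_s,t_e]}] \supseteq T \neq \emptyset$, contradicting the assumption that the innermost truss $I'$ of $G_{[t_s,t_e]}$ has order $k'$. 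Hence $G_{[t_s,t_{e-1}]}$ admits no non-empty $(k'+1)$-truss — and, by the containment of trusses, none of order greater than $k'$ — so the order of $I''$ is at most $k'$, and together with the first bound we conclude that it equals $k'$.

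I do not anticipate a real obstacle: the argument is two applications of the same ``portability'' principle for trusses. The one point I would be careful to state precisely is the meaning of ``support greater than $k'-2$'', namely that each edge's support is the temporal support measured inside its own graph $G_\Delta$ — the very quantity that Algorithm~\ref{maximal} maintains — together with the fact that the triangles certifying an edge's membership in a truss $T$ lie entirely within $T$, which is exactly what lets $T$ be transferred between a graph and any subgraph or supergraph whose edge set contains that of $T$. Everything else is the standard edge-induced-subgraph monotonicity of truss decomposition already invoked implicitly in Proposition~\ref{prop_containment}.
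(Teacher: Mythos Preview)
Your argument is correct. The paper, however, offers no proof of this observation at all: it is stated baldly in Section~\ref{heuristic} as one of two ``simple yet effective observations'' and left as self-evident, so there is nothing to compare against beyond noting that your write-up supplies exactly the justification the paper omits. The two halves of your argument --- monotonicity of $T_k$ under passing to a supergraph for the lower bound, and the $B \subseteq A$, $|A|=|B| \Rightarrow A=B$ trick to port a hypothetical $(k'+1)$-truss back into $G_{[t_s,t_e]}$ for the upper bound --- are precisely what one would expect the authors had in mind, and your care in noting that the witnessing triangles for an edge of $T$ lie entirely inside $T$ (hence survive any embedding of $T$ into another graph) is the right way to make the transfer rigorous.
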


These two simple yet effective observations provide a minimal condition to avoid the computation of the truss decomposition in a snapshot of a temporal graph and lead to an improvement in the performance in particular datasets, as we will see in the next chapter.

\section{Evaluation}

\newcommand{\prosperloans}{\textsc{prosperloans}\xspace}
\newcommand{\lastfm}{\textsc{lastfm}\xspace}
\newcommand{\wikitalk}{\textsc{wikitalk}\xspace}
\newcommand{\wiki}{\textsc{wikipedia}\xspace}
\newcommand{\dblp}{\textsc{dblp}\xspace}
\newcommand{\stackoverflow}{\textsc{stackoverflow}\xspace}
\newcommand{\amazon}{\textsc{amazon}\xspace}

\begin{table}[b]
\small
\centering
\begin{tabular}{|p{1.9cm}p{0.5cm}p{0.5cm}p{0.5cm}p{0.8cm}p{1.9cm}|} 
 \hline
 Dataset & $|V|$ & $|E|$ & $|T|$ & window size (days) & domain\\ [0.5ex] 
 \hline\hline
 \prosperloans & 89k & 3M & 307 & 7 & economic\\ 
 \lastfm & 992 & 4M & 77 & 21 & co-listening\\
 \wikitalk & 2M & 10M & 192 & 28 & communication\\
 \dblp & 1M & 11M & 80 & 366 & co-authorship\\
 \stackoverflow & 2M & 16M & 51 & 56 & question\\ 
 &&&&&answering\\
 \wiki & 343k & 18M & 101 & 56 & co-editing\\
 \amazon & 2M & 22M & 115 & 28 & co-rating\\ 
 \hline
\end{tabular}
\caption{Description of the temporal graphs used for the experiments}
\label{table:1}
\end{table}

\paragraph*{Datasets}
We use eight real-world datasets recording timestamped interactions between entities\footnote{All datasets are made available by the KONECT Project (\url{http://konect.cc}), except for StackOverflow which is part of the SNAP Repository (\url{http://snap.stanford.edu}).}, as in~\cite{cores}. For each dataset, a window size is selected to build the corresponding temporal graph. Multiple interactions occurrinng between two entities during the same discrete timestamp are counted as one. The characteristics of the resulting graphs are reported in Table~\ref{table:1}.

\prosperloans represents the network of loans between the users of Prosper, a marketplace of loans between privates. \lastfm records the co-listening activity of the streaming platform Last.fm: two users are connected if they listened to songs of the same band during the same discrete timestamp. \wikitalk is the communication network of the English Wikipedia. \dblp is the co-authorship network of the authors of scientific papers from the DBLP computer science bibliography. \stackoverflow includes the answer-to-question interactions on StackOverflow. \wiki connects users of the Italian Wikipedia that co-edited a page within the same discrete timestamp. In the \amazon dataset, vertices are users, and edges represent the rating of at least one common item within the same discrete timestamp.

\paragraph*{Implementation}
The code\footnote{
\url{https://github.com/FraLotito/span_trusses}
} for the experiments has been implemented in C++11, compiled with g++ 5.4 and -O3 optimization, and run on a machine equipped with a 2,2 GHz CPU, 94GB RAM and Ubuntu 16.04.6 LTS (GNU/Linux 4.4.0-145-generic x86\_64).

\paragraph*{Results}
\begin{table}[t]
\small
\centering
\begin{tabular}{ |p{3.25cm}||p{3.25cm}|  }
 \hline
 Dataset & \# maximal span-trusses\\ [0.5ex] 
 \hline\hline
 \prosperloans & 293\\ 
 \lastfm & 1539\\
 \wikitalk & 466\\
 \dblp & 268\\
 \stackoverflow & 112\\
 \wiki & 1905\\
 \amazon & 303\\ 
 \hline
\end{tabular}
\caption{Number of maximal span-trusses in each dataset}
\label{table:2}
\end{table}

\begin{table}[t]
\small
\centering
\begin{tabular}{ |p{2cm}||p{1.5cm}|p{1.5cm}|p{1.5cm}|  }
 \hline
 Dataset & Baseline\newline(s) & Streaming\newline(s) & Heuristics\newline(s) \\ [0.5ex] 
 \hline\hline
 \prosperloans & 5 & 5 & 5\\ 
 \lastfm & 1318 & \textbf{1057} & 1109\\
 \wikitalk & 7497 & 818 & \textbf{336}\\
 \dblp & 513 & 112 & \textbf{85}\\
 \stackoverflow & 381 & 91 & \textbf{63}\\
 \wiki & 2447 & \textbf{1731} & 1837\\
 \amazon & 3025 & \textbf{2598} & 2607\\ 
 \hline
\end{tabular}
\caption{Experimental results}
\label{table:3}
\end{table}

Table~\ref{table:2} reports the number of maximal span-trusses that are present in the datasets. 

Table~\ref{table:3}, instead, shows the computing time for each of the datasets for the Baseline, Streaming and Heuristic algorithms. 
The table shows how computing the support of the edges in a streaming fashion improves the overall performance of the algorithm. We report a constant decrease in the time execution, with a peak with the \wikitalk dataset, which takes almost ten times less than the baseline.

The table also shows how our proposed heuristic to avoid unnecessary decompositions helps in reducing the time execution in some of the datasets, with a peak with the \wikitalk dataset which takes half the time with respect to our efficient algorithm. In some datasets, however, the heuristic comes with minimal overhead; we believe that it is worthwhile to use such version anyway, to exploit the more significant performance gain in the other cases.

\section{Related work}
The first and most obvious dense subgraph introduced to social network analysis is the clique, a subgraph in which every vertex is adjacent to every other vertex~\cite{Luce1949}. Computing cliques has several disadvantages. First, they are both too rare and too common: cliques of only a few members are frequently too numerous to be helpful, while larger cliques are too difficult to be found in real-world graphs. Second, no polynomial-time algorithm is known for this problem: this makes the enumeration of cliques impractical for moderate data sizes~\cite{Bron:1973:AFC:362342.362367}.

A number of generalizations and relaxations have been proposed to avoid the issues of rarity and tractability of cliques~\cite{alba1973graphtheoretic,plex, Mokken1979}. 

A well-known relaxation of the clique is the $k$-core decomposition~\cite{SEIDMAN1983269}. A $k$-core is a maximal subgraph in which each member is adjacent to at least $k$ other members. Unlike other clique generalizations, $k$-cores can be computed and listed in polynomial time. The disadvantage of $k$-cores is that they are too promiscuous and they can be of questionable utility.

The concept $k$-truss has been introduced as a compromise between the expensive-to-find and overly-numerous groupings provided by cliques, $k$-cliques, $k$-clubs, $k$-plexes on the one hand, and the easy-to-compute, few-in-number, but overly-generous $k$-cores on the other~\cite{Cohen}. In most real-world graphs, the maximum trussness is much lower than the maximum coreness, and the highest order truss is much denser than the highest-order core~\cite{Shin2018}. Figure~\ref{fig:corevstruss} highlights the differences between $k$-core and $k$-truss.

\begin{figure}
\centering
\includegraphics[width=\linewidth]{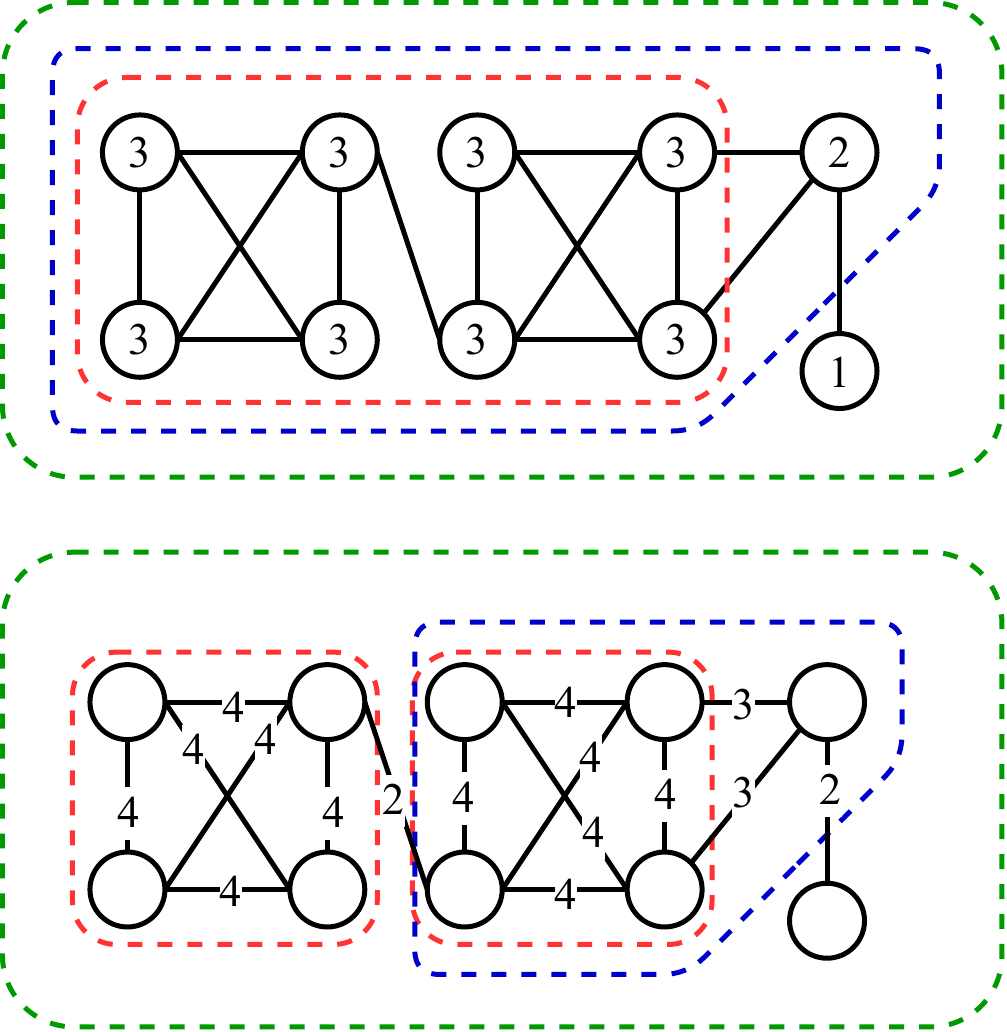}
\caption{Example of the differences between $k$-core (first picture) and $k$-truss decomposition (second picture)~\cite{mlg2019_5}. We highlight the coreness of every vertex in the first picture and the trussness of every edge in the second.}
\label{fig:corevstruss}
\end{figure}

Recently, there has been an increasing interest from the research community in generalizing cohesive structure concepts in a temporal setting.
Our work is directly inspired by the work of Galimberti et al.~\cite{cores} who generalized the concept of $k$-core and introduced the concept of \textit{span-core}. They also provided the corresponding algorithms to compute all the span-cores and to efficiently compute only the maximal ones (span-cores that are not dominated by any other span-core by both the coreness property and the span) in a temporal graph.

Other works related to ours include Semertzidis et al.~\cite{bff}, who introduced the problem of identifying a set of vertices that are densely connected in at least $k$ timestamps of a temporal network; Himmel at al.~\cite{himmel} and Viard et al.~\cite{viard}, who generalized the concept of clique in a temporal graph and proposed the respective listing algorithms; and Ma et al.~\cite{efficient_temporal_sub}, who a proposed a statistics-driven approach to find dense temporal subgraphs in large temporal networks.
\section{Conclusions}

In this paper, we have generalized the concept of $k$-truss to a temporal setting defining a structure called \textit{span-truss}, where each truss is associated with its span. We have developed both a na\"ive and an efficient algorithm to extract all the maximal span-trusses of a temporal graph, along with a heuristic to improve the running time in particular conditions. Finally, we have evaluated our proposals on a number of public datasets. 

In our future work, we plan to explore new heuristics to avoid the computation of the whole truss decomposition when not needed; for example, Burkhardt et al.~\cite{lb} summarized a number of properties and bounds that a $k$-truss must satisfy and which can be useful to avoid the computation of the decomposition when not needed.

\balance
\bibliographystyle{ACM-Reference-Format}
\bibliography{biblio}

\end{document}